\newtheorem{lemma}{Lemma}
\def\independent{\perp\!\!\!\perp}
\def\var{\text{var}}
\def\E{\text{E}}
\def\P{\text{P}}
\title{\Large Identification of complier and noncomplier average causal effects\\in the presence of \textit{latent} missing-at-random (LMAR) outcomes:\\a unifying view and choices of assumptions}
\author{Trang Quynh Nguyen, Michelle C. Carlson, and Elizabeth A. Stuart\\[.5em]\normalsize Johns Hopkins Bloomberg School of Public Health}
\date{}
\begin{document}

\maketitle

\begin{abstract}
    \noindent The study of treatment effects is often complicated by noncompliance and missing data.
    In the one-sided noncompliance setting where of interest are the complier and noncomplier average causal effects (CACE and NACE), we address outcome missingness of the \textit{latent missing at random} type (LMAR, also known as \textit{latent ignorability}). That is, conditional on covariates and treatment assigned, the missingness may depend on compliance type. 
    Within the instrumental variable (IV) approach to noncompliance, methods have been proposed for handling LMAR outcome that additionally invoke an exclusion restriction type assumption on missingness, but no solution has been proposed for when a non-IV approach is used.
    This paper focuses on effect identification in the presence of LMAR outcome, with a view to flexibly accommodate different principal identification approaches.
    We show that under treatment assignment ignorability and LMAR only, effect nonidentifiability boils down to a set of two connected mixture equations involving unidentified stratum-specific response probabilities and outcome means. 
    This clarifies that (except for a special case) effect identification generally requires two additional assumptions: a \textit{specific missingness mechanism} assumption and a \textit{principal identification} assumption.
    This provides a template for identifying effects based on separate choices of these assumptions.
    We consider a range of specific missingness assumptions, including those that have appeared in the literature and some new ones. Incidentally, we find an issue in the existing assumptions, and propose a modification of the assumptions to avoid the issue.
    Results under different assumptions are illustrated using data from the Baltimore Experience Corps Trial.

    ~

    \noindent\textbf{Keywords:} principal stratification, missing outcome, principal ignorability, exclusion restriction, latent missing at random, latent ignorability
\end{abstract}




\pagenumbering{arabic}

\section{Introduction}\label{sec:intro}

The study of treatment effects is often complicated by noncompliance and missing data. 
This paper considers the one-sided noncompliance setting, where noncompliance occurs in the treatment but not control condition.
Here in addition to (or instead of) the average treatment effect (ATE), 
we are interested in the causal effects of treatment assignment on two types of study participants, aka \textit{principal strata} \citep{frangakis2002PrincipalStratificationCausal}: those who \textit{would} and those who \textit{would not} comply to the active treatment \textit{if they were offered it}. (What to comply means is specific to the study, e.g., participating in a one-off intervention, or 80\% adherence to a medication.) For brevity, we refer to these two types as \textit{compliers} and \textit{noncompliers}, although it might help to think of them as \textit{would-be} compliers and noncompliers.
While everyone belongs in one type, the type is not observed in the control condition, posing a challenge for identifying the complier and noncomplier average causal effects (CACE and NACE), also known as \textit{principal causal effects} (PCEs). These effects require special identification assumptions beyond standard causal inference assumptions.

Outcome missingness is ubiquitous \citep{wood2004AreMissingOutcome} and further complicates effect identification. 
This poses the double challenge of (i) identifying the PCEs (had there been no missing data) and (ii) recovering identification under outcome missingness. This paper focuses on (ii) and specifically addresses a type of outcome missingness termed \textit{latent missing at random} (LMAR, explained shortly).
To set the stage, we briefly mention major approaches to PCE identification before reviewing assumptions that have been used to handle outcome missingness within those approaches.

There are several approaches that supplement standard causal inference assumptions with assumptions that specifically handle the fact that compliance type is only partially observable -- to achieve identification of PCEs in the absence of missing data. We consider two major approaches. One treats the assigned treatment as an instrumental variable (IV) that supposedly satisfies the \textit{exclusion restriction} (ER) assumption: it does not affect outcome other than through affecting treatment received \citep{Angrist1995}. This means the NACE is zero, and the CACE explains the full effect of treatment assignment. 
The second approach does not restrict the NACE to zero but instead leverages covariates to invoke the \textit{principal ignorability} (PI) assumption: conditional on baseline covariates, principal stratum is ignorable for outcome under control, or (put another way) compliers and noncompliers share the same outcome distribution under control \citep{stuart2015AssessingSensitivityMethods,feller2017PrincipalScoreMethods,ding2017PrincipalStratificationAnalysis}. 
As ER and PI are untestable, sensitivity analyses are called for.
In this paper we will also include several sensitivity assumptions deviating from PI that point identify the CACE and NACE \citep{ding2017PrincipalStratificationAnalysis,nguyen2023SensitivityAnalysisPrincipal}. We refer to these assumptions (ER, PI, PI replacement) as \textit{principal identification assumptions}.

\subsection{Missingness assumptions}

With outcome missingness, the assumptions above no longer suffice, and some assumptions about the missingness are needed to recover effect identification. We consider two layers of missingness assumptions.

\subsubsection{Latent missing-at-random as a general assumption}

\cite{frangakis1999AddressingComplicationsIntentiontotreat} proposed the \textit{latent ignorability} assumption: the missingness is independent of the outcome value given the assigned treatment and compliance type. (``Latent'' refers to the fact that compliance type is not observed for those assigned to control.) As a starting point in reasoning about missingness, this assumption is appealing in that it allows the missingness to depend on compliance type in \textit{both} treatment conditions, whereas a missing at random (MAR) assumption does not allow this in the control condition. 
\cite{mealli2004AnalyzingRandomizedTrial} stated this assumption in a form (which we adopt in this paper) that explicitly allows the missingness to depend on baseline covariates in addition to treatment assigned and compliance type. While \textit{latent ignorability} is a fine label, we opt to call the assumption \textit{latent missing-at-random} (LMAR) \citep[following][]{peng2004ExtendedGeneralLocation,beesley2019SequentialImputationModels}. This label signals that the assumption is a relaxation of a MAR assumption. It also says that the assumption is about missingness, clearly differentiating it from other assumptions that are also called ``ignorability'' such as treatment assignment ignorability and principal ignorability.

Note that LMAR means that the outcome is missing at random (MAR) in the treatment condition but missing not at random (MNAR) in the control condition. Hence LMAR is generally not sufficient to recover effect identification.

\subsubsection{Specific missingness mechanisms}

For an analysis in the IV/ER approach, to recover CACE identification, \cite{frangakis1999AddressingComplicationsIntentiontotreat} supplemented LMAR with a second ER assumption on \textit{response}, calling the combination of the two ER assumptions ``compound ER.'' \textit{Response} here means the outcome is observed, i.e., the opposite of missing. %
\citeauthor{frangakis1999AddressingComplicationsIntentiontotreat} stated the assumption in a deterministic form: if a noncomplier's outcome is to be observed (missing), it would be observed (missing) under either treatment assignment. 
\cite{mealli2004AnalyzingRandomizedTrial} instead used a stochastic version of this assumption that only requires that, for noncompliers, missingness is independent of treatment assignment given covariates. \citeauthor{mealli2004AnalyzingRandomizedTrial} also propose that in some settings one might prefer the alternative assumption that this conditional independence holds among compliers rather than noncompliers; they call this \textit{response ER for compliers} to differentiate from the original assumption, although it is not really an ER assumption. \cite{jo2010HandlingMissingData} called the original assumption \textit{response ER} (rER) and \citeauthor{mealli2004AnalyzingRandomizedTrial}'s alternative assumption \textit{stable complier response} (SCR). We adopt \citeauthor{jo2010HandlingMissingData}'s terminology, and will also refer to rER as \textit{stable noncomplier response} (SNR).

To our knowledge, the missing outcome problem has not been explicated within other principal identification approaches. Methods papers on PI and on sensitivity analysis for PI violation have so far put this problem aside. Some include a small note that MAR outcomes are simple to handle, e.g., by inverse probability of response weighting \citep{feller2017PrincipalScoreMethods,ding2017PrincipalStratificationAnalysis,nguyen2023SensitivityAnalysisPrincipal}, while non-ignorable missingness (e.g., LMAR) is a complex topic for future research \citep{ding2017PrincipalStratificationAnalysis,nguyen2023SensitivityAnalysisPrincipal}.

\begin{table}
\caption{Acronyms and of shorthand math notations}\label{tab:acronyms}
\centering
\resizebox{.8\linewidth}{!}{%
\begin{tabular}{llcll}
    \multicolumn{2}{l}{Acronyms}
    &~~~~~~~~~&
    \multicolumn{2}{l}{Mathematical shorthands}
    \\\hline
    ATE & average treatment effect
    && $\Delta$ & $\E[Y_1-Y_0]$
    \\
    CACE & complier average causal effect
    && $\Delta_c$ & $\E[Y_1-Y_0\mid C=c]$
    \\
    NACE & noncomplier average causal effect
    && $\delta_c(X)$ & $\E[Y_1-Y_0\mid X,C=c]$
    \\
    PCEs & principal causal effects
    \\
    &&& $\pi_c(X)$ & $\P(C=c\mid X,Z=1)$
    \\
    IV & instrumental variable
    && $\pi_{0c}^R(X)$ & $\P(C=c\mid X,Z=0,R=1)$
    \\
    ER & exclusion restriction
    \\
    PI & principal ignorability
    && $\mu_{zc}(X)$ & $\E[Y\mid X,Z=x,C=c]$
    \\
    PIsens & sensitivity to PI violation
    && $\kappa_0(X)$ & $\E[Y\mid X,Z=0]$
    \\
    GOR & generalized odds ratio
    && $\kappa_0^R(X)$ & $\E[Y\mid X,Z=0,R=1]$
    \\
    MR & mean ratio
    \\
    SMD & standardized mean difference
    && $\sigma_{0c}^2(X)$ & $\var(Y\mid X,Z=0,C=c)$
    \\
    &&& $\varsigma_0^{2R}(X)$ & $\var(Y\mid X,Z=0,R=1)$
    \\
    LMAR & latent missing at random
    \\
    MAR & missing at random
    && $\varpi_{zc}(X)$ & $\P(R=1\mid X,Z=z,C=c)$
    \\
    MNAR & missing not at random
    && $\lambda_0(X)$ & $\P(R=1\mid X,Z=0)$
    \\
    \\
    rER & response ER
    \\
    (near-)SNR & (near) stable noncomplier response
    \\
    (near-)SCR & (near) stable complier response
    \\
    rPI & response PI
    \\
    rPO & proportional response odds
    \\
    \hline
\end{tabular}%
}
\end{table}

\subsection{Our contributions}

To address the gaps of the current method options, our paper provides a more general template for handling the identification challenge due to LMAR outcome. This template flexibly accommodates different principal identification approaches including those commonly used in practice.
The results that have appeared in the literature (restricted to stable response assumptions within the IV/ER principal identification approach) are obtained as specific case solutions, but the template is more general and allows the use of different specific missingness assumptions (not limited to stable response) that identify certain response probabilities.

To develop the template, we first take a step back and characterize the nonidentifiability of the PCEs under the combination of treatment assignment ignorability and LMAR only -- which we call the \textit{base assumptions} -- before adding specific missingness assumptions and principal identification assumptions. The results are summarized below.
\begin{enumerate}[leftmargin=*]
    \item We show that PCE nonidentifiability under the base assumptions boils down to a set of two mixture equations involving unidentified principal-stratum-specific \textit{response probabilities} and \textit{outcome means} under control given covariates. We call these the \textit{response mixture equation} and the \textit{outcome mixture equation}. They are connected because the unknowns in the first equation also appear in the second equation.
    
    The key insight from this finding, which defines the aforementioned template, is that PCE identification generally requires two additional assumptions: a specific missingness mechanism assumption to solve the response mixture equation and a principal identification assumption to then solve the outcome mixture equation.
    These are separate processes, so assumptions of different flavors can be combined, e.g., the ER principal identification assumption does not need to be paired with an ER-type missingness assumption. The choice of these assumptions should be based on substantive considerations.

    \item To provide immediately useful results, we derive identification formulas when supplementing the base assumptions with different pairings of principal identification and specific missingness assumptions. The principal identification assumptions covered in these results include ER, PI and several assumptions that replace PI in sensitivity analysis.
    The specific missingness assumptions belong in two types. One type compares missingness across treatment conditions; these include the existing stable response assumptions (SNR/rER and SCR) plus a modified version of these assumptions (see point 4 below). The second type, which we propose as an alternative, compares missingness across principal strata; these include a PI-type assumption on response (rPI) and an assumption that lets the missingness under control differ between compliers and noncompliers.

    There are two special results that are simple.
    First, if PI is used as the principal identification assumption, the PCEs are identified without needing to assume a specific missingness mechanism. This is because PI provides a solution to the outcome mixture equation that bypasses the response mixture equation. (Note though that when a PI-based analysis is paired with a sensitivity analysis that deviates from PI, the sensitivity analysis falls out of this special case and requires a specific missingness mechanism.)
    Second, under either PI or rPI (combined with LMAR), the outcome turns out to be MAR.

    \item Importantly, the template we employ can be used to derive identification formulas should other assumptions be chosen, as long as (i) the principal identification assumption would point-identify the effects had the outcome been fully observed, and (ii) the specific missingness assumption solves the response mixture equation.



    \item An incidental finding is that the existing stable response assumptions (SNR/rER and SCR) have a limitation: it is possible for them to imply implausible response probabilities. We propose a modification of these assumptions to avoid this issue.
\end{enumerate}

Note that as this work focuses on LMAR, it does not tackle missingness that depends on the outcome conditional on covariates, treatment assigned and compliance type.

\medskip

The paper proceeds as follows. Section \ref{sec:EC-intro} presents the Experience Corps study concerning a volunteering intervention for the elderly, which will serve as the illustrative example. Section \ref{sec:setting} introduces the setting, notation, estimands, and base assumptions that are made throughout. Section \ref{sec:nonidentifiability} clarifies the non-identifiability of the estimands under the base assumptions alone. Section \ref{sec:solutions} solves the identification problem using different specific missingness assumptions and principal identification assumptions. Section \ref{sec:illustration} uses Experience Corps data to illustrate the identification results. 
Section \ref{sec:conclusions} concludes with a discussion.

\section{An example: the Baltimore Experience Corps Trial}\label{sec:EC-intro}

In the Baltimore Experience Corps Trial \citep{gruenewald2016BaltimoreExperienceCorps}, adults aged 60 or older were randomized either (i) to participate in a program (the \textit{intervention} condition) where they received training and then served as volunteers helping children in elementary schools  or (ii) to be given usual volunteer opportunities available in the community (the \textit{control} condition). Those in the intervention condition varied a great deal in the number of hours volunteered, while volunteering activity in the control condition was not tracked. The study has been analyzed using the principal stratification framework by dichotomizing the number of hours volunteered \citep{gruenewald2016BaltimoreExperienceCorps,parisi2015IncreasesLifestyleActivities}. We follow this convention with a dichotomization at 730 hours of volunteering over 24 months (on average 7 hours per week) to form a \textit{high participation} and a \textit{low participation} group. (These correspond to our generic terms \textit{compliers} and \textit{noncompliers} above.) In the intervention arm, over half of study participants belong in the high participation group and under half in the low participation group. Previous analyses have used ER as the principal identification assumption. For illustrative purposes in this paper we will use other assumptions in addition to ER. 

We will examine one of the study's two main outcomes, \textit{perceived generativity achievement}, where generativity refers to ``care and concern directed toward others, typically younger individuals,'' and covers elements such as desire and action to make a difference, to give back, to contribute to the lives of others or to create a legacy \citep[see][]{gruenewald2016BaltimoreExperienceCorps}. This outcome had an overall missingness of 22.6\%, and the missingness varies widely -- 10.1\% and 30.2\% in the high and low participation groups under treatment, and 26.3\% under control. Previous analyses used a mixture modeling method that assumes the missingness is MAR; we will relax this assumption and consider different missingness mechanisms under LMAR.

\section{PCE nonidentifiability re-examined}\label{sec:setting}

\subsection{Setting, notation and estimands}\label{subsec:notation}

Let $X$ denote baseline covariates, $Z$ denote treatment assignment (1 for the active treatment, 0 for the control condition) and $Y$ denote the observed outcome. Let $Y_z$ be the potential outcome had treatment $z$ been assigned ($z=0,1$). In defining these potential outcomes, we invoke the consistency assumption, which equates $Y=ZY_1+(1-Z)Y_0$.

Let $C$ denote compliance type. The person is called a complier ($C=1$) if they \textit{would}, and a noncomplier ($C=0$) if they \textit{would not}, comply to the active treatment \textit{if they were assigned to it} -- where what it means to ``comply'' to the active treatment is defined in a study-specific manner. 
In the treatment arm, $C$ is observed in the actual behavior of the person. In the control arm, the person is not offered the treatment, so $C$ is not observed.
Let $R$ be the binary \textit{response} indicator, where $R=1$ if $Y$ is observed, and $R=0$ if $Y$ is missing. 
The full data for an individual are $(X,C,Y_1,Y_0)$, and the observed data are $O:=(X,Z,ZC,R,RY)$. Assume that we observe $n$ i.i.d. copies of $O$.

The estimands of interest are the two PCEs (CACE and NACE, which we denote by $\Delta_c$, for $c=1,0$) and the ATE (which we denote by $\Delta$).
\begin{alignat*}{2}
    &\Delta_c
    &&:=\E[Y_1-Y_0\mid C=c],
    \\
    &\Delta
    &&:=\E[Y_1-Y_0].
\end{alignat*}

Below we will use additional shorthand notations, all of which will be introduced at first use. For ease of reference, these shorthand notations as well as the acronyms are listed in Table~\ref{tab:acronyms}.

\subsection{Base assumptions}

There are two base assumptions that we make throughout. They serve as starting points, and are not sufficient for identifying the effects in the absence of missing data or recovering identification in the presence of missing data.

\begin{center}
\begin{tabular}{lll}
    A0: 
    & $Z\independent (Y_1,Y_0,C)\mid X$ & (treatment assignment ignorability),~~and 
    \\
    & $0<\P(Z=1\mid X)<0$ & (treatment assignment positivity)
    \\
    B0: 
    & $R\independent Y\mid X,Z,C$ & (latent missing at random, or LMAR),~~and 
    \\
    & $\P(R=1\mid X,Z,C)>0$ & (response positivity)
\end{tabular}
\end{center}

Each of assumptions A0 and B0 above combines a conditional independence component with the related positivity requirement; the two components are always used together. To avoid cluttering presentation, we will often refer to A0 simply as treatment assignment ignorability and to B0 simply as LMAR, leaving the positivity requirement implicit.

For simplicity, we assume that A0 and B0 condition on the same set of baseline covariates, $X$. Things are more complicated if the two covariate sets differ, or if outcome missingness depends on variables that are post-treatment-assignment; this is left to future work.
Note that if treatment assignment is randomized, A0 holds with any set of baseline covariates.

Let us see what is gained by these base assumptions. Let $\delta_c(X):=\E[Y_1-Y_0\mid X,C=c]$ (for $c=1,0$) denote stratum-specific conditional effects.
It can be shown (see proof in the  Appendix) that the PCE for stratum $c$ is a weighted mean of $\delta_c(X)$,
\begin{align}
    \Delta_c
    &=\frac{\E[\delta_c(X)\P(C=c\mid X)]}{\E[\P(C=c\mid X)]}.\label{eq:Delta.c}
\end{align}
Under A0, the weighting function is identified,
\begin{align}
    \P(C=c\mid X)=\overbrace{\P(C=c\mid X,Z=1)}^{\textstyle=:\pi_c(X)},
\end{align}
and the conditional effect is equal to a difference in means (see proof in the  Appendix),
\begin{align}
    \delta_c(X)
    =\overbrace{\E[Y\mid X,C=c,Z=1]}^{\textstyle=:\mu_{1c}(X)}~-~\overbrace{\E[Y\mid X,C=c,Z=0]}^{\textstyle=:\mu_{0c}(X)}.\label{eq:delta.c}
\end{align}
This means that identification of $\Delta_c$ amounts to identifying $\mu_{1c}(X)$ and $\mu_{0c}(X)$. Identification of these two conditional means would also identify the ATE, which can be expressed as 
\begin{align}
    \Delta=\E[\delta_1(X)\P(C=1\mid X)+\delta_0(X)\P(C=0\mid X)].
\end{align}
Under B0, these means are equated to the means over the so-called complete cases,
\begin{align}
    \mu_{zc}(X)=\E[Y\mid X,C=c,Z=z,R=1].\label{eq:mu.complete.cases}
\end{align}
This identifies $\mu_{1c}(X)$ but not $\mu_{0c}(X)$ because $C$ is observed for $Z=1$ but not for $Z=0$. LMAR means MAR in the treatment arm and MNAR in the control arm.

To sum up, under the base assumptions, the PCEs and the ATE are unidentified due to the nonidentifiability of $\mu_{0c}(X)$.


\subsection{Nonidentifiability of $\mu_{0c}(X)$ under the base assumptions}\label{sec:nonidentifiability}

We seek to characterize this nonidentifiability of $\mu_{0c}(X):=\E[Y\mid X,C=c,Z=0]$ in precise terms, with the goal to obtain a general view that clarifies how it may be resolved. We start with the simpler case with full data before turning to the current case.

The reasoning below is all about the control condition ($Z=0$), and conditions on covariates throughout. This is explicit in notation, but is mostly implicit in verbal explanation.

\paragraph{The simpler case with outcome fully observed.}

In the control condition, the outcome distribution is a mixture of those from compliers and noncompliers, but it is not known who is a complier or a noncomplier. Leaning on the fact that the mixture mean, which is observed in this case, is a weighted average of component means,
\begin{align}
    \overbrace{\E[Y\mid X,Z=0]}^{\textstyle=:\kappa_0(X)}=
    \sum_{c=0}^1
    \overbrace{\E[Y\mid X,Z=0,C=c]}^{\textstyle=:{\color{red}\mu_{0c}(X)}}
    \underbrace{\P(C=c\mid X,Z=0)}_{\textstyle=\pi_c(X)\text{~\footnotesize under A0}},
    \label{eq:full.mixture.pre}
\end{align}
we have 
a mixture equation, i.e., an equation with two unknowns,
\begin{align}
    \kappa_0(X)={\color{red}\mu_{01}(X)}\pi_1(X)+{\color{red}\mu_{00}(X)}\pi_0(X).\label{eq:full.mixture}
\end{align}
This one equation characterizes the nonidentifiability of $\mu_{0c}(X)$ in this simple case with full data. In this case, any of the principal identification assumptions mentioned in Section~\ref{sec:intro} is sufficient to resolve this nonidentifiability and identify the PCEs.

\paragraph{Back to the current case.}

With outcome missingness, we have a similar equation to (\ref{eq:full.mixture.pre}), except where all the expectations and probabilities additionally condition on $R=1$ (observing the outcome).
\begin{align}
    \overbrace{\E[Y\mid X,Z=0,R=1]}^{\textstyle=:\kappa_0^R(X)}=
    \sum_{c=0}^1
    \underbrace{\E[Y\mid X,Z=0,C=c,R=1]}_{\textstyle={\color{red}\mu_{0c}(X)}\text{~\footnotesize under B0 (see (\ref{eq:mu.complete.cases}))}}
    \overbrace{\P(C=c\mid X,Z=0,R=1)}^{\textstyle=:{\color{violet}\pi_{0c}^R(X)}}.\label{eq:outcome.mixture.pre}
\end{align}
This gives us the \textit{outcome mixture equation}
\begin{align}
    \kappa_0^R(X)={\color{red}\mu_{01}(X)}{\color{violet}\pi_{01}^R(X)}+{\color{red}\mu_{00}(X)}{\color{violet}\pi_{00}^R(X)},\label{eq:outcome.mixture}
\end{align}
which is similar to (\ref{eq:full.mixture}), but with a different term on the left hand side and a different pair of mixture weights.
Under A0, it can be shown (see proof in the  Appendix) that
\begin{align}
    \pi_{0c}^R(X)=
    \pi_c(X)
    \overbrace{\P(R=1\mid X,Z=0,C=c)}^{\textstyle=:\varpi_{0c}(X)}
    /
    \overbrace{\P(R=1\mid X,Z=0)}^{\textstyle=:\lambda_0(X)},\label{eq:mix.wts}
\end{align}
so we can also write (\ref{eq:outcome.mixture}) as
\begin{align}
    \kappa_0^R(X)=
    {\color{red}\mu_{01}(X)}\underbrace{\frac{\pi_1(X){\color{purple}\varpi_{01}(X)}}{\lambda_0(X)}}_{\textstyle{\color{violet}\pi_{01}^R(X)}}
    \,+\,
    {\color{red}\mu_{00}(X)}\underbrace{\frac{\pi_0(X){\color{purple}\varpi_{00}(X)}}{\lambda_0(X)}}_{\textstyle{\color{violet}\pi_{00}^R(X)}}.\tag{\ref{eq:outcome.mixture}}
\end{align}

In addition to this outcome mixture equation, we have a \textit{response mixture equation}, obtained by treating $R$ as an outcome. This equation is of the same simple form as (\ref{eq:full.mixture}),
\begin{align}
    \lambda_0(X)={\color{purple}\varpi_{01}(X)}\pi_1(X)+{\color{purple}\varpi_{00}(X)}\pi_0(X).\label{eq:response.mixture}
\end{align}

To sum up, under the combination of A0 and B0, the PCE nonidentifiability problem boils down to the pair of mixture equations -- (\ref{eq:outcome.mixture}) obtained under A0 and B0 and (\ref{eq:response.mixture}) obtained under A0 -- with a combination of four unknowns.

\section{A general solution to the nonidentifiability problem and implications for practice}\label{sec:solutions}

To identify $\mu_{0c}(X)$, we need to find a way to solve the response mixture equation (\ref{eq:response.mixture}), and then plug the solution into the outcome mixture equation (\ref{eq:outcome.mixture}) and find a way solve for $\mu_{0c}(X)$. The first step requires adding an assumption about the missingness that is more specific than LMAR. The second step requires a principal identification assumption; this assumption is required regardless of whether there is missing data.

\subsection{Identifying the mixture weights assuming a specific missingness mechanism}\label{sec:specific-miss-assumptions}

To solve (\ref{eq:response.mixture}) one might consider different assumptions, including but not limited to those listed here. We roughly group these assumptions in two categories: those that compare missingness across treatment conditions (including assumptions that already exist), and those that compare missingness across principal strata (a new category we propose).
All these assumptions condition on baseline covariates.
For simplicity we assume that the set of covariates is the same as in the base assumptions.

\subsubsection{Assumptions comparing missingness across treatment conditions}\label{sec:specific-miss-type1}

Recall that the two unknowns in (\ref{eq:response.mixture}) are $\color{purple}{\varpi_{01}(X)}$ and $\color{purple}{\varpi_{00}(X)}$, the conditional response probabilities under control among compliers and noncompliers. Any assumption that pins down one of these unknowns by relating it to missingness probabilities in the intervention arm will help solve (\ref{eq:response.mixture}) and identify both probabilities (and thus the mixture weights).
The two specific missingness assumptions that have appeared in the literature, SNR/rER and SCR (see Section~\ref{sec:intro}), both belong in this category. We will start with these and give a sense of considerations that might suggest either of them. Then, and this is important, we will point out an under-appreciated limitation of these assumptions, and propose that they be replaced with a modified version that avoids this limitation. We will also comment on the general category of assumptions that compare missingness across treatment conditions.


\paragraph{Stable response of a stratum.}
These assumptions can be stated in in a simple form:

\begin{center}
\begin{tabular}{ll}
    SNR/rER: & $R\independent Z\mid X,C=0$
    \\
    SCR: & $R\independent Z\mid X,C=1$
\end{tabular}
\end{center}

\noindent These assumptions say that for one stratum (noncompliers in SNR/rER and compliers in SCR), within covariate levels the outcome is missing at the same rate in the two arms of the study.
Each assumption obtains one of the unknowns in (\ref{eq:response.mixture}), allowing it to be solved for the other one.
SNR/rER equates ${\color{purple}\varpi_{00}(X)}=\varpi_{10}(X)$; SCR equates ${\color{purple}\varpi_{01}(X)}=\varpi_{11}(X)$.

These two assumptions require different sorts of justification. SNR/rER may be invoked based on the idea that treatment assignment acts as an instrument vis-a-vis response (treated as a second outcome). This should be justified separately from, say, an assumption that treatment assignment is an instrument with respect to the outcome, because even if the latter assumption holds, that does not imply that treatment assignment operates the same way with respect to response. SNR/rER is violated, for example, if being assigned to treatment makes noncompliers feel more cared for and thus are more likely to provide outcome assessment. 

SCR, on the other hand, might be motivated by the idea that compliers are people with a strong tendency to do what they are asked to do, so their response to outcome assessment -- a form of ``compliance'' -- should be the same regardless of study arm. This is a strong assumption, as arguably compliance/response may be specific to the task (volunteering vs. responding to outcome assessment) and the context (assignment to intervention vs. control).

These two different assumptions are not to be combined, not only because they are motivated differently, but also because their combination would imply that conditional on covariates (but not compliance type) missing rates are equal between the intervention and control conditions, which may well contradict what is observed.

\paragraph{A limitation of stable response.}

The above two assumptions unfortunately can imply implausible response probabilities that are negative or greater than 1. 
This does not mean that they necessarily do in any application, but the possibility should cause concern. Take rER/SNR for example. It implies that
${\color{purple}\varpi_{01}(X)}=\frac{\lambda_0(X)-\pi_0(X)\varpi_{10}(X)}{\pi_1(X)}$. 
If for some values of $X$, the response probability of noncompliers in the \textit{intervention} arm $\varpi_{10}(X)$ exceeds $\lambda_0(X)/\pi_0(X)$ (where recall that $\lambda_0(X)$ is the response probability in the \textit{control} arm unconditional on compliance type and $\pi_0(X)$ is the noncomplier prevalence), then \mbox{${\color{purple}\varpi_{01}(X)}<0$}. If for some other $X$ values, $\varpi_{10}(X)$ is smaller than $[\lambda_0(X)-\pi_1(X)]/\pi_0(X)$, then ${\color{purple}{\varpi_{01}(X)}}>1$.
The same issue affects $\color{purple}\varpi_{00}(X)$ under SCR, which is a miror image of SNR.



While there now exists a literature following and extending on the original rER assumption \citep[e.g.,][and work cited in Section~\ref{sec:intro}]{dunn2005EstimatingTreatmentEffects,zhou2006ITTAnalysisRandomized,taylor2009MultipleImputationMethods,chen2009IdentifiabilityEstimationCausal,lui2010NotesOddsRatio,mealli2012RefreshingAccountPrincipal,chen2015SemiparametricInferenceComplier}, 
surprisingly, 
this risk of
implausible response probabilities seems to be under-appreciated. This might be because previous work, when deriving $\color{red}\mu_{0c}(X)$, did this only for one setting (compound ER), and thus did not need to derive (let alone scrutinize) the response probabilities $\color{purple}\varpi_{0c}(X)$ as an intermediate step (see more details on this in the Appendix).

The consequence of any implausible response probabilities (a contradiction with the observed data distribution) is that the mixture weights $\color{violet}{\pi_{01}^R(X)}$ and $\color{violet}{\pi_{00}^R(X)}$ they imply are unrealistic.
This leads to unrealistic $\color{red}{\mu_{01}(X)}$ or $\color{red}{\mu_{00}(X)}$ or both, depending on the principal identification assumption.


\paragraph{An improvement: switching to \textit{near stable} response.}

When substantive considerations point toward a stable response type of assumption, it is important to have some guard against a possible case of implausible response probabilities. There are multiple ways this can potentially be tackled. As this is a problem of stable response assumptions specifically and is not central to the objective of developing a general effect identification template, we propose a simple ad hoc fix: a modification of these assumptions to avoid this problem.


We call the modified assumptions near-SNR and near-SCR, and number them B1 and B2. These assumptions are close to SNR and SCR but respect the legal range of the response probabilities. Near-SNR coincides with SNR at $X$ values where SNR implies a \textit{proper} response probability under control for compliers (defined as ${\color{purple}\varpi_{01}(X)}\in[\epsilon,1]$ for some small $\epsilon>0$). For $X$ values where SNR implies that this probability is below $\epsilon$ (or above 1), near-SNR assumes the probability is $\epsilon$ (or 1). Near-SCR is similarly formulated, and differs from SCR at $X$ values where SCR implies an improper value for the noncomplier response probability under control $\color{purple}\varpi_{00}(X)$. The use of $\epsilon$ instead of zero for the lower end of the response probability range is to respect the positivity component of assumption B0.

This modification from \textit{stable} to \textit{near stable} response is simply a shift from an exact to an approximate version of an assumption, which is likely fine in practice because substantive considerations are often rough rather than exact. The two versions coincide when the exact version does not contradict the observed data distribution, and the approximate version avoids such conflict all together. Therefore we propose, if substantive considerations suggest a stable response assumption, that the researcher adopt the near stable version instead.


Identification results for the response probabilities and mixture weights under near-SNR and near-SCR are included in Table~\ref{tab:missingprobs}, and all proofs are provided in the  Appendix.

\smallskip

\paragraph{A general note about this category of assumptions.}

Above we have considered only assumptions of the stable response flavor, but any assumption that belong in this category of comparing missingness \textit{across treatment conditions} (as a way to obtain one of the two unknowns in the mixture equation (\ref{eq:response.mixture})) can have this problem of implying implausible response probabilities. The same strategy of switching to an approximate version applies.

\begin{table}[t!]
    \caption{Identification of the stratum-specific conditional probabilities of observing the outcome under control (${\color{purple}\varpi_{01}(X)}$, ${\color{purple}\varpi_{00}(X)}$) and of the mixture weights (${\color{violet}\pi_{01}^R(X)}$, ${\color{violet}\pi_{00}^R(X)}$), under different \textit{specific missingness mechanisms}}
    \label{tab:missingprobs}
    \centering
    \resizebox{\textwidth}{!}{%
    \begin{tabular}{lcccc}
         \textit{Missingness} & \multicolumn{4}{c}{\textit{Identification results for}}
         \\\cline{2-5}
         \\[-.9em]
         \textit{assumption}
         & ${\color{purple}\varpi_{01}(X)}$ & ${\color{purple}\varpi_{00}(X)}$ 
         & ${\color{violet}\pi_{01}^R(X)}$ & ${\color{violet}\pi_{00}^R(X)}$
         \\\toprule
         \multicolumn{5}{l}{Assumptions comparing missingness across treatment conditions:}
         \\[.8em]
         B1: near-SNR
         & {\footnotesize\begin{tabular}{@{}c@{}}
             $\frac{\lambda_0(X)-\pi_0(X)\varpi_{10}(X)}{\pi_1(X)}$ if within range, 
             \\
             else restrict to range
         \end{tabular}}
         & $\frac{\lambda_0(X)-\pi_1(X){\color{purple}\varpi_{01}(X)}}{\pi_0(X)}$
         & $\pi_1(X)\frac{{\color{purple}\varpi_{01}(X)}}{\lambda_0(X)}$
         & $1-{\color{violet}\pi_{01}^R(X)}$
         \\[.8em]
         B2: near-SCR
         & $\frac{\lambda_0(X)-\pi_0(X){\color{purple}\varpi_{00}(X)}}{\pi_1(X)}$ 
         & {\footnotesize\begin{tabular}{@{}c@{}}
             $\frac{\lambda_0(X)-\pi_1(X)\varpi_{11}(X)}{\pi_0(X)}$ if within range, 
             \\
             else restrict to range
         \end{tabular}} 
         & $1-{\color{violet}\pi_{00}^R(X)}$
         & $\pi_0(X)\frac{{\color{purple}\varpi_{00}(X)}}{\lambda_0(X)}$
         \\[.8em]
         \midrule
         \multicolumn{5}{l}{Assumptions comparing missingness across principal strata:}
         \\[.5em]
         B3: rPI
         & $\lambda_0(X)$ & $\lambda_0(X)$
         & $\pi_1(X)$ & $\pi_0(X)$
         \\[.8em]
         B4: rPO
         & $\frac{\gamma_1(X)-\omega_1(X)}{2[\varrho_1(X)-1]\pi_1(X)}$ 
         & $\frac{\gamma_0(X)-\omega_0(X)}{2[\varrho_0(X)-1]\pi_0(X)}$
         & $\frac{\gamma_1(X)-\omega_1(X)}{2[\varrho_1(X)-1]\lambda_0(X)}$
         &  $\frac{\gamma_0(X)-\omega_0(X)}{2[\varrho_0(X)-1]\lambda_0(X)}$
         \\[-.8em]
         \\\bottomrule
         \\[-.8em]
         \multicolumn{5}{l}{\footnotesize
         \begin{tabular}{@{}l@{}l@{}}
             Notes:~
             & $\varrho_c(X):=\frac{\varpi_{1c}(X)/[1-\varpi_{1c}(X)]}{\varpi_{1(1\!-\!c)}(X)/[1-\varpi_{1(1\!-\!c)}(X)]}$ is the complier-to-noncomplier odds ratio of response in the treatment arm.
             \\
             & $\gamma_c(X):=[\pi_c(X)+\lambda_0(X)][\varrho_c(X)-1]+1$.
             ~$\omega_c(X):=\sqrt{[\gamma_c(X)]^2-4\pi_c(X)\lambda_0(X)\varrho_c(X)[\varrho_c(X)-1]}$.
         \end{tabular}}
    \end{tabular}%
    }
\end{table}

\subsubsection{Assumptions comparing missingness across principal strata}

We propose to add this second type of assumptions to the researcher's toolbox to expand options.
Assumptions of this type are specifically about how the missingness under control compares across the principal strata, which helps solve equation (\ref{eq:response.mixture}).
We list here two assumptions as examples; these assumptions do not have the implausible response probabilities issue. Results under these assumptions are included in Table~\ref{tab:missingprobs}.

\paragraph{Response PI.}
This assumption is of the same flavor as the PI assumption mentioned in the literature review, except it is with respect to response (i.e., observing the outcome) rather than the outcome. We reserve the simple label PI for the principal identification assumption PI, and refer to the current missingness assumption as \textit{response PI} (rPI).

\begin{center}
    B3 (rPI):~~~~~$R\independent C\mid X,Z=0$
\end{center}


rPI says that within levels of covariates, compliers and noncompliers share the same response probability under control, i.e., ${\color{purple}\varpi_{01}(X)}={\color{purple}\varpi_{00}(X)}$.
This combined with (\ref{eq:response.mixture}) identifies both probabilities by $\lambda_0(X)$, and thus identifies the mixture weights as ${\color{violet}\pi_{0c}^R(X)}=\pi_c(X)$.

A different argument -- one that yields the same result -- is that rPI combined with LMAR implies that the missingness under control is MAR, $R\independent Y\mid X,Z=0$ (see proof in the  Appendix). This means $\kappa_0^R(X)=\kappa_0(X)$, which justifies using the full-data outcome mixture equation (\ref{eq:full.mixture}) (which has mixture weights $\pi_c(X)$) but replacing $\kappa_0(X)$ with $\kappa_0^R(X)$.


To justify rPI one needs to be confident that the covariates $X$ capture important charactertics and contextual factors that make a person more or less likely to respond to outcome assessment under control, such that that behavior is no longer associated with whether that person is a complier or noncomplier.

\paragraph{Proportional response odds.}
This assumption, unlike rPI, allows compliers and noncompliers to differ in outcome missingness under control. It posits that this difference, in the specific form of the complier-to-noncomplier odds ratio of response, is shared between the two study arms.

\begin{center}
    B4 (rPO): 
    ~~~~$\displaystyle\frac{\text{odds}(R=1\mid X,C=1,Z=0)}{\text{odds}(R=1\mid X,C=0,Z=0)}=\frac{\text{odds}(R=1\mid X,C=1,Z=1)}{\text{odds}(R=1\mid X,C=0,Z=1)}$
\end{center}

Unlike rPI, which is agnostic of what happens in the treatment arm, rPO uses what happens in the treatment arm to help infer response probabilities under control. However, unlike SNR and SCR, rPO does not treat compliers or noncompliers differently.

rPO could be understood as a no-interaction assumption. Specifically, the assigned treatment condition $Z$ does not interact with compliance type and covariates on the odds of missingness; its effect on the odds of missingness is separate from the effects of covariates and compliance type. No-interaction assumptions have been used elsewhere \citep[e.g.,][]{joffe2008ExtendedInstrumentalVariables,gallop2009MediationAnalysisPrincipal}.

\medskip

Above are just several specific missingness assumptions. Many other may be considered. For example, one might find the choice of the odds ratio as the invariant form in rPO arbitrary, and may instead choose to use, say, a probability ratio. Then there are two choices, either a proportional response probability assumption or a proportional missingness probability assumption (because probability ratios, unlike odds ratios, are asymmetric).
Another case is where one believes that compliers and noncompliers do differ in outcome missingness but they differ less in the control arm than in the treatment arm. In this case, it may be appropriate to choose an assumption that is in between rPI and rPO.



\subsection{Identifying $\mu_{0c}(X)$ based on a principal identification assumption (and the mixture weights)}\label{sec:principalassumptions}

We now use principal identification assumptions to solve the outcome mixture equation (\ref{eq:outcome.mixture}), contingent on the mixture weights $\color{violet}{\pi_{0c}^R(X)}$ being identified, for the most part. We show this for the assumptions mentioned in Section~\ref{sec:intro}, but any assumption that would point-identify $\color{red}\mu_{0c}(X)$ in the full-data setting suffices.

We state these assumptions in a manner that focuses on the important element. Specifically, where multiple (e.g., deterministic and stochastic) versions of an assumption would work, we use the weaker one (the stochastic version). Also, while some of these assumptions are stated in the literature as involving potential outcomes, we translate them into a version without potential outcomes; this is motivated by the fact that we are using the assumption to identify ${\color{red}\mu_{0c}(X)}:=\E[Y\mid X,Z=0,C=c]$, a conditional mean of $Y$, not of a potential outcome. The base assumption A0 has finished the job of translating from potential to actual outcomes, so we need not return to potential outcomes at this point.

Results under these assumptions appear in Table~\ref{tab:identification}, and proofs are in the  Appendix.



\paragraph{ER assumption.}

For the current one-sided noncompliance setting, ER can be stated as the conditional independence or conditional mean independence below.

\begin{center}
    A1 (ER):~~~$Y\independent Z\mid X,C=0$~~or~~$\overbrace{\E[Y\mid X,Z=0,C=0]}^{\textstyle=:{\color{red}\mu_{00}(X)}}=\overbrace{\E[Y\mid X,Z=1,C=0]}^{\textstyle=:\mu_{10}(X)}$.
\end{center}

ER identifies ${\color{red}\mu_{00}(X)}=\mu_{10}(X)$, so one solves (\ref{eq:outcome.mixture}) for ${\color{red}\mu_{01}(X)}$. The result (see Table~\ref{tab:identification}) is a simple translation of the full-data setting result, replacing $\kappa_0(X)$ with $\kappa_0^R(X)$ and replacing $\pi_1(X)$ with $\pi_{01}^R(X)$.

Note that ER implies that the CACE explains the full average treatment effect, while the NACE is zero, and it is zero within any group defined based on $X$. This is a strong assumption. 
ER is suspect
if noncompliers in the treatment arm do experience some active intervention elements \citep{marshall2016CoarseningBiasHow,andresen2021InstrumentbasedEstimationBinarised}. ER also does not allow for any psychological effects or effects of compensatory behaviors. 


\begin{table}[t!]
    \caption{Identification of the stratum-specific conditional outcome means under control (${\color{red}\mu_{01}(X)}$ and ${\color{red}\mu_{00}(X)}$) under varying principal identification assumptions, contingent on identification of the mixture weights (${\color{violet}\pi_{01}^R(X)}$ and ${\color{violet}\pi_{00}^R(X)}$)}
    \label{tab:identification}
    \centering
    \resizebox{.95\textwidth}{!}{%
    \begin{tabular}{lcccc}
        \textit{Principal identification}  
        && \multicolumn{3}{c}{\textit{Identification results for}} 
        \\\cline{3-5}
        \textit{assumption} 
        && ${\color{red}\mu_{01}(X)}$
        && ${\color{red}\mu_{00}(X)}$
        \\\toprule
        \\[-.7em]
        A1: ER  
        && $\mu_{10}(X)+\mfrac{\kappa_0^R(X)-\mu_{10}(X)}{{\color{violet}\pi_{01}^R(X)}}$ 
        && ~~~~~~~~~~$\mu_{10}(X)$~~~~~~~~~~
        \\[-.5em]
        \\\hline
        \\[-.5em]
        A2: PI 
        && $\kappa_0^R(X)$ 
        && $\kappa_0^R(X)$
        \\[-.5em]
        \\\hline
        \\[-.5em]
        A3a: PIsens-GOR 
        && $\mfrac{\alpha_{1,{\color{violet}\pi},\kappa}(X)-\beta_{1,{\color{violet}\pi},\kappa}(X)}{2(\psi_1-1){\color{violet}\pi_{01}^R(X)}}(h\!-\!l)\!+\!l$
        && $\mfrac{\alpha_{0,{\color{violet}\pi},\kappa}(X)-\beta_{0,{\color{violet}\pi},\kappa}(X)}{2(\psi_0-1){\color{violet}\pi_{00}^R(X)}}(h\!-\!l)\!+\!l$
        \\[1.2em]
        A3b: PIsens-MR 
        && $\mfrac{\rho_1\kappa_0^R(X)}{(\rho_1-1){\color{violet}\pi_{01}^R(X)}+1}$
        && $\mfrac{\rho_0\kappa_0^R(X)}{(\rho_0-1){\color{violet}\pi_{00}^R(X)}+1}$
        \\[1.2em]
        A3c: PIsens-SMDe 
        && $\kappa_0^R(X)+\eta\mfrac{{\color{violet}\pi_{00}^R(X)}\varsigma_0^R(X)}{\sqrt{1+\eta^2{\color{violet}\pi_{01}^R(X)}{\color{violet}\pi_{00}^R(X)}}}$
        && $\kappa_0^R(X)-\eta\mfrac{{\color{violet}\pi_{01}^R(X)}\varsigma_0^R(X)}{\sqrt{1+\eta^2{\color{violet}\pi_{01}^R(X)}{\color{violet}\pi_{00}^R(X)}}}$
        \\[-.8em]
        \\\bottomrule
        \\[-.8em]
        \multicolumn{5}{l}{\footnotesize
        \begin{tabular}{@{}l@{}l}
            Notes:~ 
            & $\psi_c=\psi^{2c-1}$. $\rho_c=\rho^{2c-1}$. 
            ~~$\varsigma_0^R(X):=\sqrt{\varsigma_0^{2R}(X)}$ where $\varsigma_0^{2R}(X):=\var(Y\mid X,Z=0,R=1)$.
            \\
            & $\alpha_{c,{\color{violet}\pi},\kappa}(X):=[{\color{violet}\pi_{0c}^R(X)}+\frac{\kappa_0^R(X)-l}{h-l}](\psi_c-1)+1$. 
            ~~$\beta_{c,{\color{violet}\pi},\kappa}(X):=\Big\{[\alpha_{c,{\color{violet}\pi},\kappa}(X)]^2-4{\color{violet}\pi_{0c}^R(X)}\frac{\kappa_0^R(X)-l}{h-l}\psi_c(\psi_c-1)\Big\}^{1/2}$.
        \end{tabular}
        }
    \end{tabular}%
    }
\end{table}

\paragraph{PI assumption.}
The PI assumption is usually stated as a conditional independence or conditional mean independence between a potential outcome and compliance type.
It is, however, sufficient to use a derived form of PI that does not involve potential outcomes:

\begin{center}
    A2 (PI):~~~$C\independent Y\mid X,Z=0$,~~~or~~~$\overbrace{\E[Y\mid X,Z=0,C=1]}^{\textstyle=:{\color{red}\mu_{01}(X)}}=\overbrace{\E[Y\mid X,Z=0,C=0]}^{\textstyle=:{\color{red}\mu_{00}(X)}}$.
\end{center}

PI reduces the number of unknowns in equation (\ref{eq:outcome.mixture}) to one, immediately yielding the solution ${\color{red}\mu_{01}(X)}={\color{red}\mu_{00}(X)}=\kappa_0^R(X)$.
This is a special case where the base assumptions combined with one additional assumption (PI) identifies $\mu_{0c}(X)$ (and thus the PCEs and ATE), obviating the need to identify the mixture weights.
Another interesting point about PI is that combined with LMAR it implies that the outcome is MAR, i.e., $R\independent Y\mid X,Z=0$. 
It is note-worthy that the combination of LMAR with either rPI (see section \ref{sec:specific-miss-assumptions}) or PI (here) implies MAR.

To justify PI one needs to be confident that the covariates $X$ capture the common causes of compliance to treatment and of the outcome under control. This is a strong assumption. PI may be appealing, however, in studies that collect rich baseline data.

\paragraph{Sensitivity assumptions deviating from PI.}
As the primary assumptions ER and PI are untestable, if either assumption is used, some sensitivity analysis should ideally be done.
Several sensitivity analyses for PI violation have been developed \citep{ding2017PrincipalStratificationAnalysis,nguyen2023SensitivityAnalysisPrincipal}, each replacing PI with a sensitivity assumption that deviates from PI. The assumptions are parameterized with different sensitivity parameters (odds ratio (OR), generalized odds ratio (GOR), mean ratio (MR) and standardized mean difference (SMD)) suitable for different outcome types. In these prior works, these assumptions are stated as assumptions about $Y_0$, but here we read them as assumptions about $Y$ under control.

\begin{center}
\begin{tabular}{ll}
    A3a (PIsens-GOR): 
    & $\displaystyle\frac{[\mu_{01}(X)-l]/[h-\mu_{00}(X)]}{[\mu_{00}(X)-l]/[h-\mu_{00}(X)]}=\psi$ for a range of $\psi$ considered plausible, 
    \\[.8em]
    & ~where $l,h$ are the lower and upper bounds of the outcome.
    \\[.5em]
    A3b (PIsens-MR):
    & $\displaystyle\frac{\mu_{01}(X)}{\mu_{00}(X)}=\rho$ for a range of $\rho$ considered plausible.
    \\[1em]
    A3c (PIsens-SMDe):
    & $\displaystyle\frac{\mu_{01}(X)-\mu_{00}(X)}{\sqrt{[\sigma_{01}^2(X)+\sigma_{00}^2(X)]/2}}=\eta$ for a plausible range of $\eta$, and
    \\[.8em]
    & $\sigma_{01}^2(X)=\sigma_{00}^2(X)$, where $\sigma_{0c}^2(X)\!:=\!\var(Y\!\mid\! X,Z\!=\!0,C\!=\!c)$, for $c\!=\!1,0$.
\end{tabular}
\end{center}

\noindent The OR-based assumption for a binary outcome is covered in A3a, with $l=0,h=1$.

All of these assumptions tie $\color{red}\mu_{01}(X)$ and $\color{red}\mu_{00}(X)$ together, thus also effectively reducing the number of unknowns in equation (\ref{eq:outcome.mixture}) to one. The solutions are in Table~\ref{tab:identification}.

\subsection{A template for practice}\label{sec:considerations}

Above we use one assumption to solve the response mixture equation then use another to solve the outcome mixture equation. In practice, however, it is more natural to first consider (1) which principal identification assumption to use (regardless of whether there is missing data) before considering (2) what to assume about the missingness. Conveniently, these two decisions can be made in this order because they can be made separately.


Our results point to a simple template for practice. When the principal identification assumption adopted is PI, no missingness assumption beyond LMAR is required, and the PCEs and ATE are identified by
\begin{alignat}{2}
    &\Delta_c
    &&=\frac{\E\big\{\pi_c(X)[\mu_{1c}(X)-\kappa_0^R(X)]\big\}}{\E[\pi_c(X)]},\label{eq:result.PI-PCEs}
    \\
    &\Delta
    &&=\E[\pi_1(X)\mu_{11}(X)+\pi_0(X)\mu_{10}(X)-\kappa_0^R(X)].\label{eq:result.PI-ATE}
\end{alignat}

When another principal identification assumption (e.g., ER or PIsens) is adopted, on the other hand, this only obtains \textit{contingent} identification of ${\color{red}\mu_{0c}(X)}$, and thus \textit{contingent} identification of the PCEs and ATE:
\begin{alignat}{2}
    &\Delta_c
    &&=\frac{\E\{\pi_c(X)[\mu_{1c}(X)-{\color{red}\mu_{0c}(X)}]\}}{\E[\pi_c(X)]},\label{eq:result.general-PCEs}
    \\
    &\Delta
    &&=\E\big\{\pi_1(X)[\mu_{11}(X)-{\color{red}\mu_{01}(X)}]+\pi_0(X)[\mu_{10}(X)-{\color{red}\mu_{00}(X)}]\big\},\label{eq:result.general-ATE}
\end{alignat}
where the formula for ${\color{red}\mu_{0c}(X)}$ (in the row corresponding to the principal identification assumption in Table~\ref{tab:identification}) involves the yet-to-be-identified mixture weights ${\color{violet}\pi_{0c}^R(X)}$. When a specific missingness mechanism is assumed, these mixture weights are identified (see Table~\ref{tab:missingprobs}), completing effect identification.

This template accommodates assumptions not covered here. The requirements of such assumptions are: (i) the principal identification assumption, when combined with A0, would identify the PCEs in the full-data setting; and (ii) the  specific missingness assumption, when combined with (A0, B0), identifies the response probabilities $\color{purple}{\varpi_{01}(X)}$ and $\color{purple}{\varpi_{00}(X)}$. 
This combination works because (i) implies contingent (on the mixture weights) identification of $\color{red}{\mu_{0c}(X)}$ in the LMAR outcome setting.  

We reiterate that these two assumptions should be considered each on its own. (ER does not justify rER; and PI does not justify rPI.)
Of the two, the principal identification assumption is arguably much more important and needs to be carefully considered and well justified (and when in doubt sensitivity analysis is called for). The choice of a specific missingness assumption may not always be obvious, however, in which case different assumptions may be used and the range of results (rather than a single one) be considered.

\section{Illustration}\label{sec:illustration}

We return to Experience Corps and analyze the data to illustrate the various identification results under different assumptions above. As the focus of this paper is \textit{identification} and not estimation or inference, to avoid complicating the illustration, we simply use plug-in estimators based on all identification results. Other estimation strategies (e.g., weighting, multiply-robust, and in some cases some variation of two-stage least squares) could be pursued, but these require dedicated investigation for each identification formula, which is outside the scope of this paper. Also, without investigating variance estimation, we simply use the bootstrap to produce confidence intervals to give a sense of estimation uncertainty.

The sample used in this analysis includes 284 individuals in the intervention arm (352 randomized to intervention minus 68 who for different reasons withdrew or were excluded by the program after being randomized) and 339 in the control arm (350 randomized to control minus 11 who crossed over to the treatment arm). For the sake of illustration, we implement analyses based on three principal identification assumptions including ER, PI and PIsens-SMDe (the latter with sensitivity SMD ranging from $-$0.5 to 0.5); and consider four specific missingness assumptions including near-SNR, near-SCR, rPI and rPO.

Some comments can be made about these assumptions in the context of this example. Starting with principal identification assumptions, the original analysis of these data \citep{gruenewald2016BaltimoreExperienceCorps} used the ER assumption, but expressed concern that this assumption may not hold, because people below any positive cut point on volunteering hours were exposed to the intervention to some degree. This concern stands regardless of what covariates that may be considered in the analysis. The plausibility of the PI assumption, on the other hand, is specific to the baseline covariates. PI would be believable if the baseline covariates include important predictors of compliance in the intervention arm and important predictors of the outcome under control. That would make it more plausible that compliance type and outcome under control are rendered independent, although this assumption is untestable. In the illustrative analysis we use the same covariate set from the original analysis without modification (see Table~\ref{tab:EC-description}). This set includes age, sex, race, education, income, depressive symptoms (the Geriatric Depression Scale, \citeauthor{yesavage1982DevelopmentValidationGeriatric}, \citeyear{yesavage1982DevelopmentValidationGeriatric}), number of major morbidities, baseline measure of the outcome, and study year cohort (four cohorts); these variables are indeed predictive of both volunteering hours in the intervention arm and of the outcome under control. We note though that for a substantive (rather than illustrative) analysis, the choice of covariates for the PI assumption should be given thoughtful consideration based on theory and exploration of available baseline data. In case PI does not hold, the sensitivity analysis gives a sense of departure from PI-based effect estimates if conditional on covariates the outcome mean under control differs between compliers and noncompliers by at most an SMD of 0.5 and at least by an SMD of $-0.5$.

Regarding missingness, an important consideration for any application is whether the general LMAR assumption is plausible. LMAR says that missingness may depend on treatment assigned, compliance type and baseline covariates, but not on the outcome. LMAR should not be assumed if there is strong concern that the missingness may be caused by the outcome itself. In the present example, if one suspects that higher generativity (i.e., better outcome) \textit{makes} people more likely to respond to outcome assessment, then that violates LMAR. If on the other hand one thinks that there may be an association between outcome and missingness but believes that it is likely explained by baseline covariates (including baseline generativity, depressive symptoms, physical health and socio-economic status) and compliance type, then that is an argument for LMAR. The present paper only covers the LMAR case, so all the analyses here are based on LMAR.
Regarding the four specific missingness assumptions (near-SNR, near-SCR, rPI and rPO), in this case we do not have any substantive information to support one over another. We thus wish to see results under all these assumptions.

There is minimal missingness in two covariates, income (1.6\%) and morbidities (2.9\%). For simplicity and to keep the focus on the central topic, we fill in these missing values with a single stochastic imputation conditional on other analysis variables.

\begin{table}[t]
\caption{Baseline covariates in Experience Corps sample}\label{tab:EC-description}
\resizebox{\linewidth}{!}{%
\begin{tabular}{lrlrlrlrlrl}
    & \multicolumn{2}{c}{Full} & \multicolumn{2}{c}{Control} & \multicolumn{6}{c}{Intervention group} 
    \\\cline{6-11}
    & \multicolumn{2}{c}{sample} & \multicolumn{2}{c}{group} & \multicolumn{2}{c}{All} & \multicolumn{2}{@{}c@{}}{High participation} & \multicolumn{2}{c}{Low participation}
    \\ 
    & \multicolumn{2}{c}{n=623} & \multicolumn{2}{c}{n=339} & \multicolumn{2}{c}{n=284} & \multicolumn{2}{c}{n=168} & \multicolumn{2}{c}{n=116} \\ 
    \hline
    Age (mean (SD)) & 67.37 & (5.83) & 67.33 & (5.83) & 67.41 & (5.84) & 67.32 & (5.71) & 67.54 & (6.04) \\ 
    Sex: male (number (\%)) & 88 & (14.1) & 53 & (15.6) & 35 & (12.3) & 20 & (11.9) & 15 & (12.9) \\ 
    Race: Black (number (\%)) & 577 & (92.6) & 313 & (92.3) & 264 & (93.0) & 156 & (92.9) & 108 & (93.1) \\ 
    Education: some college (number (\%)) & 349 & (56.0) & 195 & (57.5) & 154 & (54.2) & 88 & (52.4) & 66 & (56.9) \\ 
    Income (number (\%)) &  &  &  &  &  \\ 
    ~~~$<$15K & 178 & (28.6) & 99 & (29.2) & 79 & (27.8) & 31 & (26.7) & 48 & (28.6) \\
    ~~~15 to $<$35K & 230 & (36.9) & 126 & (37.2) & 104 & (36.6) & 43 & (37.1) & 61 & (36.3) \\ 
    ~~~35K+ & 215 & (34.5) & 114 & (33.6) & 101 & (35.6) & 42 &(36.2) & 59 & (35.1) \\ 
    Major morbidities (mean (SD)) & 1.96 & (1.12) & 1.96 & (1.14) & 1.97 & (1.11) & 1.84 & (1.09) & 2.05 & (1.11) \\ 
    Depress symptom score (mean (SD)) & 1.12 & (1.67) & 1.17 & (1.83) & 1.07 & (1.46) & 0.96 & (1.37) & 1.22 & (1.57) \\ 
    Generative achievement (mean (SD)) & 5.17 & (0.81) & 5.16 & (0.83) & 5.18 & (0.80) & 5.15 & (0.80) & 5.22 & (0.79) \\ 
    \hline
\end{tabular}%
}
\bigskip
\bigskip
    \caption{Estimates of the average effects of Experience Corps on \textit{perceived generativity achievement} at 24 months for the high and low participation groups (CACE and NACE) and overall (ATE) under different pairings of principal identification assumption and specific missingness mechanism. Effects are on the mean difference scale (see $\Delta_c$ and $\Delta$ definitions in section \ref{subsec:notation}). 95\% bootstrap percentile confidence intervals are shown in brackets.}
    \label{tab:EC-application}
    \resizebox{\linewidth}{!}{%
    \begin{tabular}{lllll}
        \textit{Specific} & \multicolumn{4}{c}{\textit{Principal identification assumption}} 
        \\\cline{2-5}
        \textit{missingness} & Exclusion & Principal & \multicolumn{2}{l}{Sensitivity assumption (PIsens-SMDe)}
        \\\cline{4-5}
        \textit{mechanism} & restriction (ER) & ignorability (PI) & min SMD ($-0.5$)  & max SMD (0.5)
        \\\hline
        near-SNR & 
        \begin{tabular}{@{}l@{}l@{}}
            \textsc{cace}:~ & 0.19 (-0.10, 0.74)
            \\
            \textsc{nace}: & 0
            \\
            \textsc{ate}: & 0.11 (-0.06, 0.44)
        \end{tabular}
        &
        \multirow{10}{*}{
        \begin{tabular}{@{}l@{}l@{}}
            \textsc{cace}:~ &0.15 (0.05, 0.25)
            \\
            \textsc{nace}: &0.07 (-0.07, 0.20)
            \\
            \textsc{ate}: &0.12 (0.02, 0.22)
            \\
            \multicolumn{2}{@{}l@{}}{(not requiring a specific}
            \\
            \multicolumn{2}{@{}l@{}}{missingness assumption)}
        \end{tabular}}
        & 
        \begin{tabular}{@{}l@{}l@{}}
            \textsc{cace}:~ &\phantom{-}0.27 (0.15, 0.38)
            \\
            \textsc{nace}: &-0.12 (-0.24, 0.03)
            \\
            \textsc{ate}: &\phantom{-}0.11 (0.02, 0.22)
        \end{tabular}
        & 
        \begin{tabular}{@{}l@{}l@{}}
            \textsc{cace}:~ &0.02 (-0.06, 0.13)
            \\
            \textsc{nace}: &0.26 (0.08, 0.39)
            \\
            \textsc{ate}: &0.12 (0.02, 0.22)
        \end{tabular}
        \\\cline{1-2}\cline{4-5}
        near-SCR & 
        \begin{tabular}{@{}l@{}l@{}}
            \textsc{cace}:~ &0.18 (0.04, 0.33)
            \\
            \textsc{nace}: &0
            \\
            \textsc{ate}: &0.11 (0.02, 0.20)
        \end{tabular}
        &&
        \begin{tabular}{@{}l@{}l@{}}
            \textsc{cace}:~ &\phantom{-}0.24 (0.12, 0.35)
            \\
            \textsc{nace}: &-0.15 (-0.27, -0.01)
            \\
            \textsc{ate}: &\phantom{-}0.08 (-0.01, 0.18)
        \end{tabular}
        & 
        \begin{tabular}{@{}l@{}l@{}}
            \textsc{cace}:~ &0.06 (-0.03, 0.17)
            \\
            \textsc{nace}: &0.30 (0.12, 0.42)
            \\
            \textsc{ate}: & 0.15 (0.05, 0.26)
        \end{tabular}
        \\\cline{1-2}\cline{4-5}
        rPI & 
        \begin{tabular}{@{}l@{}l@{}}
            \textsc{cace}:~ &0.20 (0.04, 0.36)
            \\
            \textsc{nace}: &0
            \\
            \textsc{ate}: & 0.12 (0.02, 0.22)
        \end{tabular}
        &&
        \begin{tabular}{@{}l@{}l@{}}
            \textsc{cace}:~ &\phantom{-}0.28 (0.15, 0.38)
            \\
            \textsc{nace}: &-0.11 (-0.23, 0.03)
            \\
            \textsc{ate}: &\phantom{-}0.12 (0.02, 0.22)
        \end{tabular}
        & 
        \begin{tabular}{@{}l@{}l@{}}
            \textsc{cace}:~ &0.02 (-0.06,0.13)
            \\
            \textsc{nace}: &0.26 (0.09,0.38)
            \\
            \textsc{ate}: &0.12 (0.02, 0.22)
        \end{tabular}
        \\\cline{1-2}\cline{4-5}
        rPO & 
        \begin{tabular}{@{}l@{}l@{}}
            \textsc{cace}:~ & 0.18 (0.03,0.35)
            \\
            \textsc{nace}: &0
            \\
            \textsc{ate}: &0.10 (0.02, 0.21)
        \end{tabular}
        &&
        \begin{tabular}{@{}l@{}l@{}}
            \textsc{cace}:~ &\phantom{-}0.25 (0.13,0.36)
            \\
            \textsc{nace}: &-0.14 (-0.26,0.00)
            \\
            \textsc{ate}: &\phantom{-}0.09 (0.00, 0.19)
        \end{tabular}
        & 
        \begin{tabular}{@{}l@{}l@{}}
            \textsc{cace}:~ &0.04 (-0.04,0.16)
            \\
            \textsc{nace}: &0.28 (0.11,0.41)
            \\
            \textsc{ate}: &0.14 (0.04,0.24)
        \end{tabular}
        \\\hline
    \end{tabular}%
    }
\end{table}

We fit a model for compliance type in the intervention arm (the $\pi_c(X)$ model); three models for observing the outcome fit separately to the high and low participation groups in the intervention arm and to the control group (the $\varpi_{11}(X)$, $\varpi_{10}(X)$ and $\lambda_0(X)$ models); three outcome models also fit separately to the complete cases in these same three groups (the $\mu_{11}(X)$, $\mu_{10}(X)$ and $\kappa_0^R(X)$ models). We use logistic regression for the compliance type and response models. As the outcome is bounded (on a 1-to-6 scale) and is highly skewed to the upper end, we model it using GLM with a generalized logit link defined as $g(\mu)=\log[(\mu-l)/(h-\mu)]$ where $l=1,h=6$ are the lower and upper outcome bounds. (This departs from the original analysis, which uses a linear outcome model.)
The PIsens-SMDe analysis requires additionally estimating $\varsigma^{2R}(X):=\var(Y\mid X,Z=0,R=1)$. Using the quasi-likelihood approach, we assume this conditional variance is proportional to $[\kappa_0^R(X)-l][h-\kappa_0^R(X)]$. This is equivalent to assuming that $Y$, after being shifted and scaled to the $[0,1]$ interval, follows a quasibinomial model given covariates.

We then estimate the CACE, NACE and ATE under all assumption scenarios using formulas (\ref{eq:result.PI-PCEs}) (or (\ref{eq:result.general-PCEs})) and (\ref{eq:result.PI-ATE}) (or (\ref{eq:result.general-ATE})), where the functions of $X$ in these formulas are replaced with their estimated version computed at each $X$ value, and the expectations are estimated by averaging the integrands over all units in the dataset.
Code for this analysis is available in the R-package \texttt{latentMAR} at \url{https://github.com/trangnguyen74/latentMAR}.

Results are shown in Table~\ref{tab:EC-application}.
Effect estimates vary a great deal across the columns of the table, which is not surprising, because they are based on different principal identification assumptions. In applications generally one principal identification assumption is selected based on substantive considerations and perhaps is accompanied by one (or more) sensitivity analysis that deviates from that assumption. We pay more attention here to the variation within each principal identification approach.

First, as effect estimates under different specific missingness mechanisms reflect different identification results for response probabilities (Table~\ref{tab:missingprobs}), there is variation across the specific missingness mechanisms within each column of the table.
In this example, under ER, the variation in effect estimates assuming different missingness mechanisms is small, except the uncertainty is a lot larger for near-SNR. Under PI there is only one set of estimates because identification does not require missingness assumptions beyond LMAR. For PIsens-SMDe, under each end of the sensitivity parameter range, there is some variation in effect estimates across missingness assumptions, and this is more noticeable for the ATE estimate (which is a weighted average of the CACE and NACE estimates). All these variations in point estimates, however, are small relative to their uncertainty. This, overall, means that for this specific study, effect estimates are not very sensitive to these specific missingness assumption.

Second, consider scenarios in Table~\ref{tab:EC-application} where the outcome turns out to be MAR: when PI is assumed, and when rPI is combined with any principal identification assumption. In all these cases, the ATE estimate is the same, even though we have estimated it in different ways based on different principal identification assumptions. This is not surprising because the ATE is identified under MAR, by (\ref{eq:result.PI-ATE}), and this is regardless of how we arrive at MAR. The CACE and NACE estimates, however, vary with the principal identification assumption.

\section{Discussion}\label{sec:conclusions}

This paper provides a unifying view of the nonidentifiability of PCEs (and ATE) with LMAR outcomes under treatment assignment ignorability: this involves two connected mixture equations with four unknowns, where two of the unknowns (conditional outcome means under control) need to be identified for the effects to be identified. This informs a template for effect identification: combining a principal identification assumption with a missingness assumption that is more specific than LMAR to solve the pair of mixture equations. Applying this template, effect identification formulas are derived (under different principal identification assumptions) for several specific missingness assumptions; these fall into the categories of assumptions that compare missingness across treatment conditions (type 1) and assumptions that compare missingness across principal strata (type 2). These results are immediately useful for cases where those assumptions are relevant, and they serve as examples for the utility of the template.

Some specific results should be mentioned. Under either PI (the principal identification assumption) or rPI (the specific missingness assumption) -- combined with LMAR -- the outcome missingness turns out to be MAR. For the first case (PI), a specific missingness assumption is not required for effect identification. However, if a PI-based analysis is paired with a sensitivity analysis for PI violation, the sensitivity analysis requires a specific missingness assumption.

An incidental finding emerged that deserves some attention. The two existing missingness assumptions SNR/rER and SCR can imply implausible response probabilities, which is a contradiction with the observed data distribution. As a simple fix, we propose that these assumptions be replaced with a modified version (near-SNR and near-SCR) that avoids this problem. More generally, of the two types of specific missingness assumptions, type 1 (where SNR and SCR belong) is particularly prone to this problem. This is because assumptions of this type fix one of the unknown response probabilities in the response mixture equation, which concerns \textit{the control condition}, to a value defined entirely based on \textit{the intervention condition}; and this value may be unrealistic given the actual degree of missingness under control (of compliers and noncompliers combined). For this problem, the same modification strategy is generally useful.
Type 2 assumptions are less prone to this problem because, unlike type 1, they do not link missingness rates under control to those under intervention.
The two assumptions rPI and rPO listed under type 2 are in fact free of this problem, and so are their linear combinations. 
Note though that this guarantee for rPO is due to the use of the odds ratio to connect complier and noncomplier response probabilities. Replacing the odds ratio with another parameter, however, can induce conflict if that parameter is agnostic of the binary variable's bounds. For example, if the probability ratio is used instead, and if for some $X$ values the observed response under control is high but the assumption-implied response probability ratio is large, then one of the stratum-specific response probabilities can exceed 1.
Therefore caution should be taken to avoid this kind of situation. In our opinion though the problem affecting type 1 assumptions is more serious.


Going back to the big picture, with the proposed template, this work expands options for missingness assumptions and new identification results. This calls for additional work on estimation and inference.
As many methods have been developed for PCE estimation under different principal identification assumptions in the full-data setting, it is desirable that those methods are adapted to handle outcome missingness, and that the adaptation is flexible enough to allow different specific missingness assumptions when these are required. Inference under the near-SNR and near-SCR assumptions is likely challenging and should be investigated.

Beyond this, there are other directions for future extension.
A perhaps simple extension is to allow the conditioning covariate sets to differ between causal identification assumptions and missingness assumptions.
Another is to investigate how to handle the combination of LMAR outcome with missingness in covariates and in some cases in treatment assignment and compliance type; in these situations the identification challenge is more complex.
It is also useful to adapt the current template (and perhaps specific results) to the two-sided noncompliance setting, which is also common.
And last but not least, research is needed on how to handle the situation where one does not assume the outcome missingness is LMAR.

\bibliography{reference.bib}


\clearpage
\pagenumbering{arabic}
\renewcommand*{\thepage}{A\arabic{page}}
\appendix





\section*{Appendix}

\begin{lemma}\label{lm:1}
If $A\independent(B,C)$ then $A\independent B\mid C$.
\end{lemma}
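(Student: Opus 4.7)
The plan is to translate the hypothesis $A\independent(B,C)$ into a statement about conditional densities (or probability measures), then massage this into the definition of $A\independent B\mid C$.

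First I would write out what $A\independent(B,C)$ means: the joint law factorizes, so $p(a,b,c)=p(a)\,p(b,c)$, or equivalently $p(a\mid b,c)=p(a)$ whenever the conditioning event has positive probability. The target $A\independent B\mid C$ is the statement $p(a\mid b,c)=p(a\mid c)$, so it suffices to show that $p(a\mid c)$ also equals $p(a)$, i.e., that $A\independent C$ marginally.

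Next I would derive the marginal independence $A\independent C$ from $A\independent(B,C)$ by integrating (or summing) out $B$: from $p(a,b,c)=p(a)p(b,c)$, integrating over $b$ yields $p(a,c)=p(a)p(c)$, so $p(a\mid c)=p(a)$. Chaining the two equalities gives $p(a\mid b,c)=p(a)=p(a\mid c)$, which is exactly $A\independent B\mid C$.

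There is no real obstacle here — the only subtlety is handling conditioning on zero-probability events if one wants a measure-theoretic proof, in which case the cleanest route is to work with the factorization of the joint distribution rather than conditional densities, and appeal to the fact that $\sigma(B,C)\supseteq\sigma(C)$ together with the tower property. I would favor the short density-level argument above for presentation, since it matches the level of formality used in the rest of the paper.
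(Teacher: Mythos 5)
Your proof is correct and follows essentially the same route as the paper's: both first derive the marginal independence $A\independent C$ from $A\independent(B,C)$ (you by integrating out $b$ from the factorized joint, the paper by iterated expectation of $\P(A\mid B,C)$), and then combine this with the hypothesis to conclude $A\independent B\mid C$. The only cosmetic difference is that you phrase the final step as $p(a\mid b,c)=p(a)=p(a\mid c)$ while the paper establishes the symmetric identity $\P(B\mid A,C)=\P(B\mid C)$ via Bayes' rule; these are equivalent formulations of the same conclusion.
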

\bigskip


\begin{proof}[Proof of Lemma \ref{lm:1}]
\hfill
\smallskip

\noindent
    First, note that if $A\independent(B,C)$ then $A\independent C$. This is because
    \begin{align*}
        \P(A\mid C)
        &=\E[\P(A\mid B,C)\mid C]
        && \text{(law of total probability)}
        \\
        &=\E[\P(A)\mid C]
        && (A\independent(B,C))
        \\
        &=\P(A).
    \end{align*}
    Then $A\independent B\mid C$ follows, because
    \begin{align*}
        \P(B\mid C)
        &=\frac{\P(B,C)}{\P(C)}
        && \text{(Bayes' rule}
        \\
        &=\frac{\P(B,C\mid A)}{\P(C\mid A)}
        && (A\independent(B,C)~\text{and}~A\independent C)
        \\
        &=\P(B\mid A,C).
        && \text{(Bayes' rule)}
    \end{align*}
\end{proof}
\bigskip


\begin{proof}[Proof of (\ref{eq:Delta.c})]
\hfill
\smallskip

\noindent
    We repeat what needs to be shown,
    \begin{align}
        \overbrace{\E[Y_1-Y_0\mid C=c]}^{\textstyle=:\Delta_c}=\frac{\E\{\overbrace{\E[Y_1-Y_0\mid X,C=c]}^{\textstyle=:\delta_c(X)}\P(C=c\mid X)\}}{\E[\P(C=c\mid X)]}.\tag{\ref{eq:Delta.c}}
    \end{align}
    For $z=1,0$,



\begin{align}
    \E[Y_z\mid C=c]
    &=\E\{\E[Y_z\mid X,C=c]\mid C=c\} 
    && \text{(iterated expectation)}\nonumber
    \\
    &=\E\left\{\E[Y_z\mid X,C=c]\frac{\P(X)}{\P(X)}\mid C=c\right\}\nonumber
    \\
    &=\E\left\{\E[Y_z\mid X,C=c]\frac{\P(X\mid C=c)}{\P(X)}\right\}\nonumber
    \\
    &=\E\left\{\E[Y_z\mid X,C=c]\frac{\P(X, C=c)/\P(C=c)}{\P(X,C=c)/\P(C=c\mid X)}\right\}\nonumber
    && \text{(Bayes' rule)}
    \\
    &=\E\left\{\E[Y_z\mid X,C=c]\frac{\P(C=c\mid X)}{\P(C=c)}\right\}\nonumber
    \\
    &=\frac{\E\big\{\E[Y_z\mid X,C=c]\P(C=c\mid X)\big\}}{\P(C=c)}\nonumber
    \\
    &=\frac{\E\big\{\E[Y_z\mid X,C=c]\P(C=c\mid X)\big\}}{\E[\P(C=c\mid X)]}.\label{eq:tau.z}
\end{align}
Plugging in $z=1$ and $z=0$ to (\ref{eq:tau.z}) and taking the difference obtains (\ref{eq:Delta.c}).
\end{proof}
\bigskip


\begin{proof}[Proof of (\ref{eq:delta.c})]
\hfill
\smallskip

\noindent To show (\ref{eq:delta.c}), we need to show that for $z=1,0$,
    \begin{align}
        \E[Y_z\mid X,C=c]=\E[Y\mid X,C=c,Z=z].
    \end{align}
    This is true because
\begin{align*}
    \E[Y_z\mid X,C=c]
    =\E[Y_z\mid X,C=c,Z=z]
    =\overbrace{\E[Y\mid X,C=c,Z=z]}^{\textstyle=:\mu_{cz}(X)},\label{eq:mu.zc}
\end{align*}
where the first equality is due to $Z\independent Y_z\mid X,C$ (which is obtained implied by by Lemma \ref{lm:1} from the unconfoundedness part of A0), and the second equality is due to consistency (invoked when defining the effects) and positivity (also part of A0).  
\end{proof}
\bigskip


\begin{proof}[Proof of (\ref{eq:mix.wts})] 
\hfill
\smallskip

\noindent    We repeat what needs to be shown:
    \begin{align}
        \overbrace{\P(C=c\mid X,Z=0,R=1)}^{\textstyle=:\pi_{0c}^R(X)}
        =\overbrace{\P(C=c\mid X,Z=1)}^{\textstyle=:\pi_c(X)}\frac{\overbrace{\P(R=1\mid X,Z=0,C=c)}^{\textstyle=:\varpi_{0c}(X)}}{\underbrace{\P(R=1\mid X,Z=0)}_{\textstyle=:\lambda_0(X)}}.
        \tag{\ref{eq:mix.wts}}
    \end{align}
    We start with the LHS:
    \begin{align*}
        \P(C=c\mid X,Z=0,R=1)
        &=\frac{\P(C=c,R=1\mid X,Z=0)}{\P(R=1\mid X,Z=0)}
        && \text{(Bayes' rule)}
        \\
        &=\frac{\P(C=c\mid X,Z=0)\P(R=1\mid X,Z=0,C=c)}{\P(R=1\mid X,Z=0)}
        \\
        &=\P(C=c\mid X,Z=1)\frac{\P(R=1\mid X,Z=0,C=c)}{\P(R=1\mid X,Z=0)}
        && (Z\independent C\mid X)
        \\
        &=\text{RHS}.
    \end{align*}
\end{proof}
\bigskip


\begin{proof}[Elaboration of a point in section \ref{sec:specific-miss-type1}]
\hfill
\smallskip

\noindent We copy here the second paragraph under \textbf{A limitation of stable response} in section \ref{sec:specific-miss-type1}: 

``While there now exists a literature following and extending on the original rER assumption \cite[e.g.,][and work cited in Section \ref{sec:intro}]{dunn2005EstimatingTreatmentEffects,zhou2006ITTAnalysisRandomized,taylor2009MultipleImputationMethods,chen2009IdentifiabilityEstimationCausal,lui2010NotesOddsRatio,mealli2012RefreshingAccountPrincipal,chen2015SemiparametricInferenceComplier}, surprisingly, this risk of implausible response probabilities seems to be under-appreciated. This might be because previous work, when deriving $\color{red}\mu_{0c}(X)$, did this only for one setting (compound ER), and thus did not need to derive (let alone scrutinize) the response probabilities $\color{purple}\varpi_{0c}(X)$ as an intermediate step (see more details on this in the Appendix).''

We now make this concrete. As mentioned in the paper, SNR/rER implies ${\color{purple}\varpi_{00}(X)}=\varpi_{10}(X)$ and ${\color{purple}\varpi_{01}(X)}=\frac{\lambda_0(X)-\pi_0(X)\varpi_{10}(X)}{\pi_1(X)}$. This combined with our result for ${\color{red}\mu_{01}(X)}$ under ER 
in Table~\ref{tab:identification} 
imply (after some algebra) that
\begin{align*}
    {\color{red}\mu_{01}(X)}
    &=\frac{\lambda_0(X)\kappa_0^R(X)-\pi_0(X)\varpi_{10}(X)\mu_{10}(X)}{\lambda_0(X)-\pi_0(X)\varpi_{10}(X)},\label{eq:compoundER}
\end{align*}
which coincides with the result in Lemma 1 in \cite{frangakis1999AddressingComplicationsIntentiontotreat}, except we condition on covariates. Their proof, translated to our current language, involves a step that replaces $\pi_1(X){\color{purple}\varpi_{01}(X)}$ with $\lambda_0(X)-\pi_0(X)\varpi_{10}(X)$ (to obtain the denominator above). This is just shy of deriving ${\color{purple}\varpi_{01}(X)}$, but deriving ${\color{purple}\varpi_{01}(X)}$ was not the goal in their proof. Therefore, the fact that this can imply a value for this probability that is out of range was not noticed.

\end{proof}
\bigskip


\begin{proof}[Proof of results in Table \ref{tab:missingprobs}]
\hfill
\smallskip

\noindent Table \ref{tab:missingprobs} shows results for all four quantities ($\varpi_{01}(X)$, $\varpi_{00}(X)$, $\pi_{01}^R(X)$ and $\pi_{00}^R(X)$) for completeness. The derivation is quite tedious, so here we just focus on obtaining the mixture weights, and since the two mixture weights sum to 1, we just need to obtain one of them.

    \vspace{.5em}\noindent
    \underline{Part 1}: If we were to assume SNR/rER ($R\independent Z\mid X,C=0$), we would equate ${\color{purple}\varpi_{00}(X)}:=\P(R=1\mid X,Z=0,C=0)$ to $\varpi_{10}(X):=\P(R=1\mid X,Z=1,C=0)$. This, combined with (\ref{eq:response.mixture}), would imply ${\color{purple}\varpi_{01}(X)}$ is equal to $\frac{\lambda_0(X)-\pi_0(X)\varpi_{10}(X)}{\pi_1(X)}$. Switching to \textit{near}-SNR, we have the identification result for ${\color{purple}\varpi_{01}(X)}$ in Table \ref{tab:missingprobs}.
    With that result for ${\color{purple}\varpi_{01}(X)}$, based on (\ref{eq:mix.wts}) with $c=0$, we identify the mixture weights,
    \begin{align*}
        {\color{violet}\pi_{01}^R(X)}=\pi_1(X)\frac{{\color{purple}\varpi_{01}(X)}}{\lambda_0(X)}.
    \end{align*}

    \vspace{.5em}\noindent
    \underline{Part 2}: Near-SCR is a miror image of near-SNR, so the proof is trivially similar.

    \vspace{.5em}\noindent
    \underline{Part 3}: Under rPI (formally $R\independent C\mid X,Z=0$), for $c=1,0$,
    $$\overbrace{\P(R=1\mid X,Z=0,C=c)}^{\textstyle=:{\color{purple}\varpi_{0c}(X)}}=\overbrace{\P(R=1\mid Z=0)}^{\textstyle=:\lambda_0(X)}.$$
    Combining this with (\ref{eq:mix.wts}) obtains
    $${\color{violet}\pi_{0c}^R(X)}=\pi_c(X).$$

    \vspace{.5em}\noindent
    \underline{Part 4}: The proof for rPO is most complicated. (It is very similar to GOR part of the proof of Proposition 5 in \cite{nguyen2023SensitivityAnalysisPrincipal}.) rPO states that
    \begin{align}
        \frac{{\color{purple}\varpi_{01}(X)}/[1-{\color{purple}\varpi_{01}(X)}]}{{\color{purple}\varpi_{00}(X)}/[1-{\color{purple}\varpi_{00}(X)}]}
        =\frac{\varpi_{11}(X)/[1-\varpi_{11}(X)]}{\varpi_{10}(X)/[1-\varpi_{10}(X)]}=:\varrho(X).
    \end{align}
    Combining this with the response mixture equation
    \begin{align}
        \pi_1(X){\color{purple}\varpi_{01}(X)}+\pi_0(X){\color{purple}\varpi_{01}(X)}=\lambda_0(X),\tag{\ref{eq:response.mixture}}
    \end{align}
    we have two equations with two unknowns. As all quantities in these two equations condition on $X$ only, we suppress the $(X)$ notation. Also, let the two unknowns ${\color{purple}\varpi_{01}(X)}$ and ${\color{purple}\varpi_{00}(X)}$ be represented by $u$ and $v$, respectively. Our two equations are
    \begin{align*}
        \begin{cases}
            \frac{u/(1-u)}{v/(1-v)}=\varrho
            \\
            \pi_1u+(1-\pi_1)v=\lambda_0
        \end{cases}
    \end{align*}
    subject to the condition $u,v\in[0,1]$.

    For ($X$ values such that) $\varrho=1$, we have $u=v=\lambda_0$. Now we focus on ($X$ values such that) $\varrho\neq 1$.

    The second equation gives $v=\mfrac{\lambda_0-\pi_1u}{1-\pi_1}$. Plugging this into the first equation, we obtain (after some algebra)
    $$\pi_1(\varrho-1)u^2-[(\pi_1+\lambda_0)(\varrho-1)+1]u+\lambda_0\varrho=0.$$
    This quadratic equation has two roots
    \begin{align*}
        u_1=\frac{[(\pi_1+\lambda_0)(\varrho-1)+1]+\sqrt{d}}{2\pi_1(\varrho-1)},~~~
        u_2=\frac{[(\pi_1+\lambda_0)(\varrho-1)+1]-\sqrt{d}}{2\pi_1(\varrho-1)},
    \end{align*}
    where
    $$d=[(\pi_1+\lambda_0)(\varrho-1)+1]^2-4\pi_1\lambda_0\varrho(\varrho-1).$$
    These two roots for $u$ respectively imply two values for $v$:
    \begin{align*}
        v_1=\frac{[(\lambda_0-\pi_1)(\varrho-1)-1]-\sqrt{d}}{2(1-\pi_1)(\varrho-1)},~~~v_2=\frac{[(\lambda_0-\pi_1)(\varrho-1)-1]+\sqrt{d}}{2(1-\pi_1)(\varrho-1)}.
    \end{align*}
    and we note (after some algebra) another helpful expression of $d$
    $$d=[(\lambda_0-\pi_1)(\varrho-1)-1]^2+4(1-\pi_1)\lambda_0(\varrho-1).$$

    Now we check these candidates for $u$ and $v$ against the condition $u,v\in(0,1)$. If $\varrho>1$, it can be shown that $d\geq[(\lambda_0-\pi_1)(\varrho-1)-1]^2$, which implies $v_2\geq0$ but $v_1<0$, ruling out the candidate $v_1$. If $\rho<1$ then $d\geq[(\pi_1+\lambda_0)(\varrho-1)+1]^2$, which implies $u_2\geq0$ but $u_1<0$, ruling out the candidate $u_1$. In both cases, the choice left is
    \begin{align*}
        u=u_2=\frac{[(\pi_1+\lambda_0)(\varrho-1)+1]-\sqrt{d}}{2\pi_1(\varrho-1)},
        ~~~
        v_2=\frac{[(\lambda_0-\pi_1)(\varrho-1)-1]+\sqrt{d}}{2(1-\pi_1)(\varrho-1)}.
    \end{align*}
    That $u_2$ and $v_2$ are also $\leq1$ is clear from the first equation, which says that their weighted average is $\leq1$. The solution to the set of two equations is thus $(u_2,v_2)$.

    Let $\varrho_1(X):=\varrho(X)$, $\varrho_0(X):=1/\varrho(X)$.

    The $u_2$ formula is spelled out as
    \begin{align*}
        {\color{purple}\varpi_{01}(X)}=\frac{\overbrace{[\pi_1(X)+\lambda_0(X)][\varrho_1(X)-1]+1}^{\textstyle=:\gamma_1(X)}-\overbrace{\sqrt{[\gamma_1(X)]^2-4\pi_1(X)\lambda_0(X)\varrho_1(X)[\varrho_1(X)-1]}}^{\textstyle=:\omega_1(X)}}{2\pi_1(X)[\varrho_1(X)-1]}.
    \end{align*}
    Leveraging symmetry, we have the result
    \begin{align*}
        {\color{purple}\varpi_{0c}(X)}=\frac{\overbrace{[\pi_c(X)+\lambda_0(X)][\varrho_c(X)-1]+1}^{\textstyle=:\gamma_c(X)}-\overbrace{\sqrt{[\gamma_c(X)]^2-4\pi_c(X)\lambda_0(X)\varrho_c(X)[\varrho_c(X)-1]}}^{\textstyle=:\omega_c(X)}}{2\pi_c(X)[\varrho_c(X)-1]}.
    \end{align*}
    Combining this with (\ref{eq:mix.wts}) obtains
    \begin{align*}
        {\color{violet}\pi_{0c}^R(X)}=\frac{\gamma_c(X)-\omega_c(X)}{2[\varrho_c(X)-1]\lambda_0(X)}.
    \end{align*}
\end{proof}
\bigskip


\begin{proof}[Proof of MAR under rPI]
\hfill

    \begin{align*}
        \P(R=1\mid &X,Z=0,Y)
        \\
        &=\sum_c\P(R=1\mid X,Z=0,C=c,Y)\P(C=c\mid X,Z=0,Y)
        && \text{(total probability)}
        \\
        &=\sum_c\P(R=1\mid X,Z=0,C=c)\P(C=c\mid X,Z=0,Y)
        && \text{(LMAR)}
        \\
        &=\sum_c\P(R=1\mid X,Z=0)\P(C=c\mid X,Z=0,Y)
        && \text{(rPI)}
        \\
        &=\P(R=1\mid X,Z=0)\sum_c\P(C=c\mid X,Z=0,Y)
        \\
        &=\P(R=1\mid X,Z=0).
    \end{align*}
\end{proof}
\bigskip


\begin{proof}[Proof of results in Table \ref{tab:identification}]\hfill

    \vspace{.5em}\noindent
    \underline{Part 1}: Under ER (formally $Z\independent Y\mid X,C=0$),
    $$\overbrace{\E[Y\mid X,Z=0,C=0]}^{\textstyle=:{\color{red}\mu_{00}(X)}}=\overbrace{\E[Y\mid X,Z=1,C=0]}^{\textstyle=:\mu_{10}(X)}.$$
    Combining this with the outcome mixture equation
    \begin{align}
        \pi_{01}^R(X){\color{red}\mu_{01}(X)}+[1-\pi_{01}^R(X)]{\color{red}\mu_{00}(X)}=\kappa_0^R(X)\tag{\ref{eq:outcome.mixture}}
    \end{align}
    obtains
    $${\color{red}\mu_{01}(X)}=\mu_{10}(X)+\frac{\kappa_0^R(X)-\mu_{10}(X)}{\pi_{01}^R(X)}.$$

    \vspace{.5em}\noindent
    \underline{Part 2}: The PI case has been clearly argued in the text of the paper.

    \vspace{.5em}\noindent
    \underline{Part 3}: The results for the three PIsens assumptions are a simple translation from the results for the full data case \citep[][Propositions 5 and 8]{nguyen2023SensitivityAnalysisPrincipal} to the current case with missing outcome. Those prior identification results solve the full-data outcome mixture equation
    \begin{align*}
        \pi_1(X){\color{red}\mu_{01}(X)}+\pi_0(X)]{\color{red}\mu_{00}(X)}=\kappa_0(X)\tag{\ref{eq:full.mixture}}
    \end{align*}
    under each of the PIsens assumptions.
    As the missing outcome version of this equation is 
    \begin{align}
        \pi_{01}^R(X){\color{red}\mu_{01}(X)}+\pi_{00}^R(X){\color{red}\mu_{00}(X)}=\kappa_0^R(X),\tag{\ref{eq:outcome.mixture}}
    \end{align}
    the translation of the identification results is simply to replace $\kappa_0(X)$ with $\kappa_0^R(X)$ and $\pi_c(X)$ with $\pi_{0c}^R(X)$.
\end{proof}
\bigskip


\begin{proof}[Proof of MAR under PI]
\hfill

    \begin{align*}
        \P(R=1\mid &X,Z=0,Y)
        \\
        &=\sum_c\P(R=1\mid X,Z=0,C=c,Y)\P(C=c\mid X,Z=0,Y)
        && \text{(total probability)}
        \\
        &=\sum_c\P(R=1\mid X,Z=0,C=c)\P(C=c\mid X,Z=0,Y)
        && \text{(LMAR)}
        \\
        &=\sum_c\P(R=1\mid X,Z=0,C=c)\P(C=c\mid X,Z=0)
        && \text{(PI)}
        \\
        &=\P(R=1\mid X,Z=0).
    \end{align*}
\end{proof}

\end{document}